\documentclass[acmsmall,nonacm]{acmart}
\renewcommand\footnotetextcopyrightpermission[1]{}

\AtBeginDocument{%
  }

\usepackage{amsmath}
\usepackage[ruled,vlined,linesnumbered]{algorithm2e}

\usepackage{graphicx}
\usepackage{textcomp}
\usepackage{xcolor}
\usepackage{xspace}
\usepackage{pifont}
\usepackage{makecell}
\usepackage{booktabs}
\usepackage{multirow}
\usepackage{array}
\usepackage{subcaption}
\usepackage{enumitem}
\usepackage{float}
\usepackage{colortbl}
\usepackage{diagbox}
\usepackage{listings}
\usepackage[most]{tcolorbox}
\newcommand{\techname}{TestDecision\xspace}

\usepackage{amsthm}
\newtheorem{definition}{Definition}

\newtheorem{theorem}{Theorem}
\newtheorem{remark}{Remark}
\newtheorem{assumption}{Assumption}

\begin{document}

\title{TestDecision: Sequential Test Suite Generation via Greedy Optimization and Reinforcement Learning}

\author{Guoqing Wang}
\email{guoqingwang@stu.pku.edu.cn}
\affiliation{%
  \institution{Peking University}
  \city{Beijing}
  \country{China}
}

\author{Chengran Yang}
\authornote{Co-corresponding authors.}
\email{cryang@smu.edu.sg}
\affiliation{%
  \institution{Singapore Management University}
  \city{Singapore}
  \country{Singapore}
}

\author{Xiaoxuan Zhou}
\email{zhouxx@mails.neu.edu.cn}
\affiliation{%
  \institution{Northeastern University}
  \city{Shenyang}
  \country{China}
}

\author{Zeyu Sun}
\email{zeyu.zys@gmail.com}
\affiliation{%
  \institution{Institute of Software, Chinese Academy of Sciences}
  \city{Beijing}
  \country{China}
}

\author{Bo Wang}
\email{wangbo_cs@bjtu.edu.cn}
\affiliation{%
  \institution{Beijing Jiaotong University}
  \city{Beijing}
  \country{China}
}

\author{David Lo}
\email{davidlo@smu.edu.sg}
\affiliation{%
  \institution{Singapore Management University}
  \city{Singapore}
  \country{Singapore}
}

\author{Dan Hao}
\authornotemark[1]
\email{haodan@pku.edu.cn}
\affiliation{%
  \institution{Peking University}
  \city{Beijing}
  \country{China}
}

\renewcommand{\shortauthors}{Wang et al.}

\newcommand{\red}[1]{\textcolor{red}{#1}}

\begin{abstract}
With the rapid evolution of Large Language Models (LLMs), automated software testing is witnessing a paradigm shift.
While proprietary models like GPT-4o demonstrate impressive capabilities, their high deployment costs and data privacy concerns make open-source LLMs the practical imperative for many academic and industrial scenarios.
In the field of automated test generation, it has evolved to iterative workflows to construct test suites based on LLMs. 
When utilizing open-source LLMs, we empirically observe they lack a suite-level perspective, suffering from \textit{structural myopia}—failing to generate new tests with large marginal gain based on the current covered status.
In this paper, from the perspective of sequences, we formalize test suite generation as a Markov Decision Process (MDP) and demonstrate that its objective exhibits \textbf{monotone submodularity}, which enables an effective relaxation of this NP-hard global optimization into a tractable step-wise greedy procedure.
Guided by this insight, we propose \textbf{\techname}, which transforms LLMs into neural greedy experts.
\techname\ consists of two synergistic components: (1) an inference framework which implements test suite construction following a step-wise greedy strategy; and (2) a training pipeline of reinforcement learning which equips the base LLM with sequential test generation ability to maximize marginal gain.
Comprehensive evaluations on the ULT benchmark demonstrate that \techname\ significantly outperforms existing advanced methods. 
It brings an improvement between 38.15-52.37\% in branch coverage and 298.22-558.88\% in execution pass rate over all base models, achieving a comparable performance on 7B backbone with a much larger proprietary LLM GPT-5.2.
Furthermore, \techname\ can find 58.43-95.45\% more bugs than vanilla base LLMs and exhibit superior generalization on LiveCodeBench, proving its capability to construct high-quality test suites.

\end{abstract}

\maketitle
\vspace{-2mm}
\section{Introduction}
\label{sec:introduction}
Recent advances in Large Language Models (LLMs)~\cite{deepseek-r1, gpt4o} have fundamentally transformed the landscape of automated software testing, sparking numerous works on test case generation~\cite{lemieux2023codamosa, altmayer2025coverup, yang2024advancing,gu2025llm, lee2025learning}. 
The prevailing paradigm in these approaches typically follows an iterative, sequential process: given a program under test, the model generates test cases one by one to progressively construct a test suite~\cite{yuan2024evaluating, ryan2024code}.
These methods represent the early, pioneering attempts at LLM-based test suite generation, demonstrating impressive capabilities in test suite generation.

While existing heuristic approaches~\cite{takerngsaksiri2025pytester, zhang2025automated, altmayer2025coverup} have shown empirical success for this sequential generation process, they lack a formal guarantee regarding how this process quantitatively contributes to the global optimization objective. This theoretical void causes a mathematical ambiguity whether the generated suite is near-optimal or if substantial coverage gaps persist due to sub-optimal sequential choices.
To the best of our knowledge, it is currently unproven whether prompting the model to generate the next test case in sequence, given the status of the current test suite, provides any theoretical guarantees for the optimality of the final test suite.

Viewing this in-sequence generation paradigm through a formal lens, constructing a test suite is not merely a sequence of independent generation tasks, but an MDP where the state represents the current test suite, and the action space consists of generating potential new test cases.
Evidently, a new test case that executes a particular branch is generally more valuable when that branch has not yet been covered by the current suite than exploring those already been covered.
Consequently, the LLMs must make sequential decisions, i.e., iteratively generating test cases with high value (e.g., coverage or bug detection ability on the entire test suite) based on the current checked status of the test suite, which we define as a composite state capturing both the input dimension (i.e., the accumulated coverage of code paths) and the oracle dimension (i.e., the established fault-detection capability via assertions).
Under this MDP formulation, the ultimate goal is to maximize a specific utility function (e.g., code coverage, mutation score) for the entire suite. 
We highlight that this global optimization objective is fundamentally $NP$-hard (proven in Sec.~\ref{sec:problem_solving}).

The $NP$-hard nature of this problem~\cite{church1974maximal} reveals a critical limitation in current state-of-the-art (SOTA) solutions. Most existing LLM-based methods employ unguided exploration even with test suite context, making local choices without a mechanism to approximate the global optimum.

To address this challenge, we present a novel theoretical framework for test suite generation and a step-wise greedy generation pipeline with a theoretical guarantee.
Specifically, we first formally prove that test suite generation illustrates a monotone submodular property~\cite{bordeaux2014tractability}.
This property allows us to invoke the result calculated by Nemhauser et al.~\cite{nemhauser1978analysis}, which states that a greedy algorithm maximizing marginal gain at each step is a mathematically grounded approximation guaranteed to reach at least $(1 - 1/e)$ of the optimal value. 
Concretely, by selecting a test case at each iteration step that yields the largest immediate marginal gain, the well-trained policy is guaranteed to achieve an approximation to the global optimum~\cite{nemhauser1978analysis}.
This relaxation provides the motivation for designing our step-wise greedy generation pipeline with a theoretical guarantee.

Guided by this formal derivation, we then instantiate an inference framework designed to execute this step-wise greedy strategy. 
For each step, we decide to generate a single test case that has the maximum additional checked value (i.e., \emph{marginal gain} detailed in Sec.~\ref{sec:problem_solving}) to the current test suite, treating it as an atomic greedy decision.
We find that applying our inference framework alone can already outperform vanilla and prompt engineering baselines.

However, we observe through a pilot study that without training, performing only inference guidance cannot fully resolve the sequential test generation challenge.
These approaches suffer from \textit{structural myopia} in that LLMs lack the intrinsic sequential generation capability to reliably follow the greedy objective.
Specifically, even with enough feedback information, existing approaches struggle with this task.
This observation indicates that sequential test generation is not a capability that can be reliably elicited through prompting alone; it must be \emph{learned}. 

To bridge this gap, we finally propose a training framework to enable LLMs' sequential test generation ability with step-wise Reinforcement Learning (RL).
We use RL to train an LLM as a decision policy, i.e., in each step, given the current checked status, plus execution feedback and greedy guidance, the trained LLM can generate a single test case that maximizes the marginal gain.
We represent this marginal gain as a composite reward signal that strictly penalizes LLM to generate test cases with invalid execution and rewards for which of higher coverage benefits.

We construct \techname as a synergistic combination of a greedy inference powered by the tuned LLM that augments the ability of sequential test generation.
This synergistic combination enables LLM to aggressively explore unchecked logic and approximate the global optimum while adhering to strict syntax and execution constraints.

We evaluate \techname\ on unleaked real-world and algorithmic test generation benchmarks (i.e., ULT~\cite{huang2025benchmarking} and LiveCodeBench~\cite{jainlivecodebench}), widely used in previous work~\cite{he2025llm, yang2025qwen3, wang2024advanced, liu2024deepseek}.
The results are positive: \techname\ significantly outperforms existing mainstream LLM-based baselines. 
On all evaluated base models, it improves branch coverage by 38.15-65.87\% and execution pass rate by 298.22-558.88\%. The bug detection ability also has an improvement of 58.43-95.45\% on vanilla base LLMs.
Remarkably, our 7B model even achieves a comparable performance with proprietary LLMs GPT-5.2 and demonstrates superior generalization the out-of-distribution cases~\cite{jainlivecodebench}.

In summary, this paper makes the following contributions:
\vspace{-1mm}
\begin{itemize}[leftmargin=*]
    \item \textbf{Theoretical Formalization:} We formalize test suite generation as an MDP and derive a valid relaxation based on monotone submodularity, providing a theoretical guarantee for effective test suite generation.
    \item \textbf{Novel Framework:} We propose \techname, an instantiation of the step-wise greedy generation paradigm, which includes a complementary inference framework and training pipeline ,effectively equipping LLMs with sequential test generation ability.
    \item \textbf{Extensive Evaluation:} We demonstrate that \techname\ achieves state-of-the-art performance among open-source models, effectively generalizing to unseen coding tasks.
\end{itemize}

\section{Motivation and Empirical Analysis}
\label{sec:motivation}

The paradigm of automated test generation has evolved from single-shot synthesis~\cite{yang2024evaluation, schafer2023empirical} to iterative workflows~\cite{yuan2024evaluating}. Recent studies~\cite{ryan2024code, altmayer2025coverup, gu2025llm} have demonstrated that incorporating execution feedback (e.g., coverage reports or test logs) can guide LLMs to enhance performance. 
However, existing successes largely rely on large proprietary foundation LLMs (e.g., GPT-4o~\cite{gpt4o}). While powerful, these models incur high inference costs, introduce data privacy concerns, and remain black-box services that cannot be retrained or hosted locally for domain-specific policy optimization.

\noindent \textbf{The Shift to Open-Source Models.}
There is a growing imperative to democratize software engineering capabilities using open-source models~\cite{hou2024large} that allow for local deployment and specialized training~\cite{shang2025large, yang2024large, yu2024fine}. 
Furthermore, smaller models have advantages in cost, while large proprietary models are black boxes and cannot be trusted in all scenarios.
However, a critical question remains: \textit{Can these smaller, open models effectively leverage iterative feedback?}

In this section, we show that test suite generation is inherently a \textbf{sequential decision process}.
The empirical evidence highlights that current open models lack the sequential decision-making capability to generate valuable test cases iteratively, succumbing to what we term \textit{structural myopia}.

\subsection{Motivation: The Characteristic of Sequential Decision-Making}
\label{subsec:motivation_example}

To understand why test suite generation is a sequential decision problem, consider the real-world function \texttt{correct\_sc\_info} (Figure~\ref{fig:sequential_example}). The function processes cluster information through two independent logic blocks: \textit{Block 1} corrects CPU counts, and \textit{Block 2} computes missing benchmarks.

\begin{figure}[t]
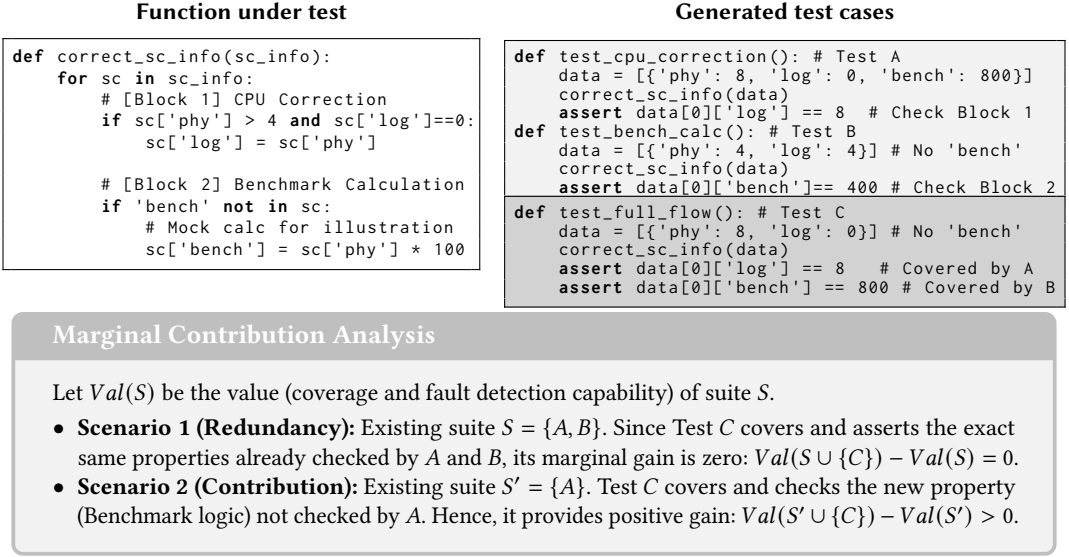

    \centering
        \begin{subfigure}[t]{0.44\textwidth}
     \vspace{0pt}
        \caption*{\textbf{Function under test}}
        \begin{lstlisting}[language=Python, basicstyle=\ttfamily\scriptsize, frame=single]
def correct_sc_info(sc_info):
    for sc in sc_info:
        # [Block 1] CPU Correction
        if sc['phy'] > 4 and sc['log']==0:
            sc['log'] = sc['phy']

        # [Block 2] Benchmark Calculation
        if 'bench' not in sc:
            # Mock calc for illustration
            sc['bench'] = sc['phy'] * 100
        \end{lstlisting}
    \end{subfigure}
    \hfill
    \begin{subfigure}[t]{0.52\textwidth}
    \vspace{0pt}
    \caption*{\textbf{Generated test cases}}
    \begin{lstlisting}[
        language=Python, 
        basicstyle=\ttfamily\scriptsize, 
        frame=tlr, 
        backgroundcolor=\color{gray!10}, 
        belowskip=0pt,
        lineskip=-1pt
    ]
def test_cpu_correction(): # Test A
    data = [{'phy': 8, 'log': 0, 'bench': 800}]
    correct_sc_info(data)
    assert data[0]['log'] == 8  # Check Block 1
def test_bench_calc(): # Test B
    data = [{'phy': 4, 'log': 4}] # No 'bench'
    correct_sc_info(data)
    assert data[0]['bench']== 400 # Check Block 2
    \end{lstlisting}
    \begin{lstlisting}[
        language=Python, 
        basicstyle=\ttfamily\scriptsize, 
        frame=single, 
        backgroundcolor=\color{gray!35}, 
        aboveskip=0pt,
        lineskip=-1pt
    ]
def test_full_flow(): # Test C
    data = [{'phy': 8, 'log': 0}] # No 'bench'
    correct_sc_info(data)
    assert data[0]['log'] == 8   # Covered by A
    assert data[0]['bench'] == 800 # Covered by B
    \end{lstlisting}
\end{subfigure}
    
    \vspace{-3mm}
    \begin{tcolorbox}[colback=gray!10, colframe=gray!50, title=\textbf{Marginal Contribution Analysis}]
    \small
    Let $Val(S)$ be the value (coverage and fault detection capability) of suite $S$.
    \begin{itemize}[leftmargin=*]
        \item \textbf{Scenario 1 (Redundancy):} Existing suite $S = \{A, B\}$. 
        Since Test $C$ covers and asserts the exact same properties already checked by $A$ and $B$, its marginal gain is zero: $Val(S \cup \{C\}) - Val(S) = 0$.
        \item \textbf{Scenario 2 (Contribution):} Existing suite $S' = \{A\}$. 
        Test $C$ covers and checks the new property (Benchmark logic) not checked by $A$. Hence, it provides positive gain: $Val(S' \cup \{C\}) - Val(S') > 0$.
    \end{itemize}
    \end{tcolorbox}

    \caption{Illustrative example of the sequential dependency. Test C is functionally valid, but its contribution depends entirely on whether Tests A and B already exist in the suite.}
    \label{fig:sequential_example}

\end{figure}

This example demonstrates that the utility (i.e., the marginal contribution to checking completeness) of a generated test is not intrinsic; it depends on the current state of the test suite. To generate effective test cases, an ideal system must generate a test case go beyond what has already been checked (Scenario 2) while avoiding redundancy (Scenario 1).
Consequently, this state-dependency proves that test suite generation is fundamentally a sequential decision process. Given an evolving test suite, the decision of what test case to generate next cannot be made in isolation but must account for the current checked status. 
Next, we will investigate whether current models exhibit this sequential test suite generation capability.

\subsection{Pilot Study}
\label{subsec:pilot_setup}

To rigorously assess the capabilities of current LLMs, we design a pilot study comparing different generation paradigms. We utilize three representative open-source models: Qwen2.5-Coder-7B~\cite{hui2024qwen2}, CodeLlama-7B~\cite{roziere2023code}, and Llama-3.1-8B~\cite{dubey2024llama}. 
To avoid potential data leakage~\cite{jainlivecodebench}, we evaluate on unleaked test case generation benchmark, ULT~\cite{huang2025benchmarking}, which consists of 3,909 function-level unit test generation tasks without corresponding test cases 
through a rigorous filtering process on real-world Python functions sourced from The Stack v2~\cite{lozhkov2024starcoder}.

We formally define three generation paradigms to simulate varying levels of generation capability:
(1) \textbf{Direct Generation (Non-Iterative):} The model is prompted to generate a complete test suite including $k$ test cases in a single model call. This serves as a baseline for the model's raw generation capacity without iteration.
(2) \textbf{Iterative (Blind):} The test suite is constructed iteratively. At step $t$, the model generates the $t$-th test case given the context of previous test cases $\{t_1, \dots, t_{t-1}\}$. Without providing any feedback information, this setting tests the model's ability to diversify tests based solely on iteration strategy and conversation history.
(3) \textbf{Iterative (Guided):} This setting mimics current feedback-driven workflows on proprietary models~\cite{gu2025llm, altmayer2025coverup}. At step $t$, the feedback information that guides LLMs to generate new tests to cover uncovered lines and branches is appended to the prompt for step $t+1$, explicitly instructing the model to target these missing parts.

We follow the standard setting, metrics, and prompts used in ULT and other benchmarks~\cite{huang2025benchmarking,wang2025testeval, jaintestgeneval}, where $k$ is set to 5. For evaluation metrics, we employ \textbf{line coverage} and \textbf{branch coverage} to measure the thoroughness of the generated suite, and utilize the \textbf{mutation score} to measure the fault-detection capability. Additionally, we report \textbf{syntax correctness} and \textbf{execution correctness} (the percentage of generated tests that compile and execute successfully) to monitor the hallucinations or invalid code structures.

\vspace{0.5em}
\noindent \textit{Observation: Structural Myopia.}
In Figure~\ref{fig:pilot_study}, we first observe a characteristic logistic-like pattern across all models and settings. In the initial phase ($k=1$ to $2$), the models rapidly cover the structural backbone of the code (e.g., happy paths), leading to a steep rise in coverage. However, as the suite size increases ($k \ge 3$), the marginal coverage gain diminishes rapidly. The curves flatten into a plateau, indicating that simply allocating more generation budget does not enable the models to penetrate complex, nested branches. 

\begin{figure}[t]
    \centering
    \includegraphics[width=\textwidth]{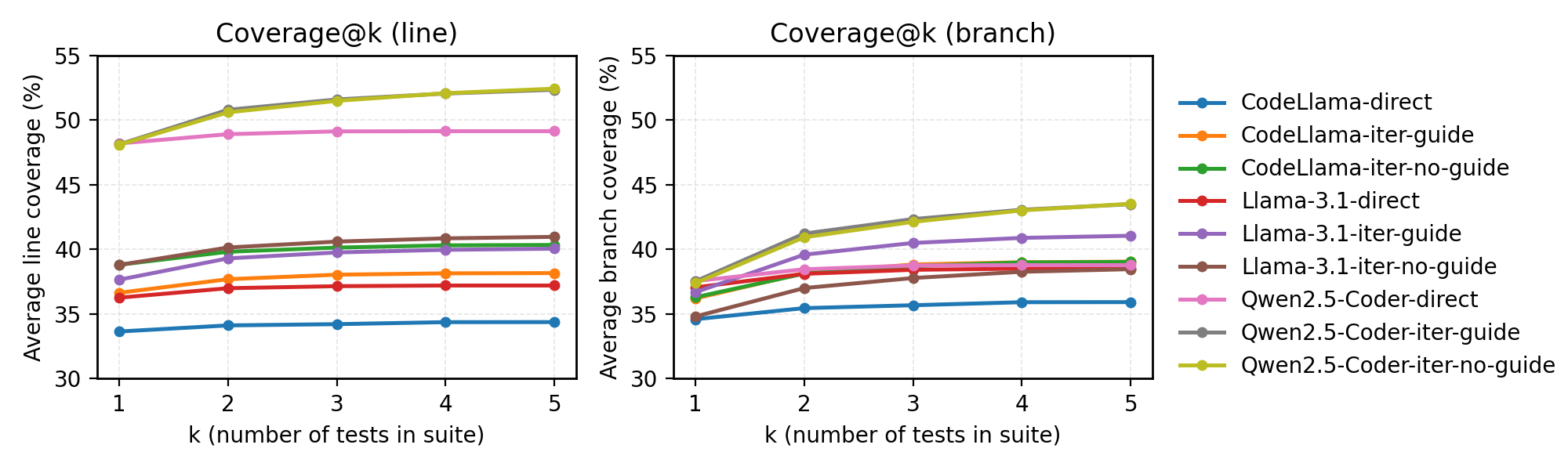} 
    \caption{Coverage growth with respect to test suite size ($k$). All models exhibit a rapid plateau, characteristic of diminishing returns. Notably, the iter-guided lines do not distinctly separate from the iter-blind lines.}
    \label{fig:pilot_study}
\end{figure}

Another critical finding is the inefficacy of explicit feedback information. As shown in Table~\ref{tab:pilot_results}, providing the model with coverage feedback (\textit{Iter-Guided}) fails to yield consistent or substantial improvements over blind iteration (\textit{Iter-Blind}), and in some cases even slightly degrades performance.

For the strongest evaluated model, Qwen2.5-Coder-7B, the guided setting achieves 52.34\% line coverage, which is numerically indistinguishable from, and even slightly lower than the blind setting (52.43\%). The mutation score also reflects this stagnation in bug detection utility, remaining unchanged at exactly 33.81\% in both settings. 
For CodeLlama-7B, line coverage even drops from 40.34\% (Blind) to 38.17\% (Guided), indicating that additional feedback even destabilizes generation for weaker models.
While Llama-3.1-8B shows a marginal increase in branch coverage, this comes at a steep cost to correctness. The syntax correctness plummets from 96.52\% to 90.51\%, and execution correctness remains low. This suggests that when forced to cover specific gaps via prompting, the model tends to hallucinate invalid syntax or unexecutable logic to satisfy the textual constraint.

We summarize these observations as \textit{Structural Myopia}. Without a look-ahead sequential decision policy, the generation process of LLMs quickly exhausts easy targets. Additionally, for base models, the prompt feedback is not effectively exploited, but may act more as noise than as a strategic guide, leaving the underlying greedy trajectory unchanged.

\begin{table*}[t]
\centering
\caption{Pilot study results of Direct Generation, Blind Iteration, and Guided Iteration.}
\label{tab:pilot_results}
\scriptsize
\begin{tabular}{ll|cc|c|cc}
\toprule
\textbf{Model} & \textbf{Setting} & \textbf{Line Cov.} & \textbf{Branch Cov.} & \textbf{Mutation Score} & \textbf{Syntax Corr.} & \textbf{Exec Corr.} \\
\midrule
\multirow{3}{*}{Qwen2.5-Coder-7B} 
  & Direct       & 49.13\% & 38.77\% & 32.61\% & \textbf{99.35\%} & \textbf{18.18\%} \\
  & Iter-Blind   & \textbf{52.43\%} & \textbf{43.52\%} & \textbf{33.81\%} & 99.13\% & 16.89\% \\
  & Iter-Guided  & 52.34\% & 43.49\% & \textbf{33.81\%} & 98.85\% & 16.43\% \\
\midrule
\multirow{3}{*}{CodeLlama-7B}
  & Direct       & 34.38\% & 35.92\% & 26.42\% & \textbf{93.83\%} & \textbf{8.94\%} \\
  & Iter-Blind   & \textbf{40.34\%} & 39.03\% & \textbf{28.07\%} & 89.08\% & 8.66\% \\
  & Iter-Guided  & 38.17\% & \textbf{39.05\%} & 27.68\% & 87.31\% & 8.95\% \\
\midrule
\multirow{3}{*}{Llama-3.1-8B}
  & Direct       & 37.20\% & 38.51\% & 26.79\% & \textbf{96.90\%} & 9.39\% \\
  & Iter-Blind   & \textbf{40.97\%} & 38.47\% & \textbf{28.14\%} & 96.52\% & \textbf{13.98\%} \\
  & Iter-Guided  & 40.05\% & \textbf{41.05\%} & 27.06\% & 90.51\% & 13.48\% \\
\bottomrule
\end{tabular}
\end{table*}

\section{Theoretical Framework and Relaxation}
\label{sec:problem_solving}

In Section~\ref{subsec:motivation_example}, we identify the sequential nature of test suite generation. To provide a rigorous foundation for our approach, we formally model this task as a global optimization problem over MDP~\cite{puterman1990markov}.
We then demonstrate that despite the computational intractability of the global objective, the underlying problem structure allows for a theoretically grounded relaxation. By proving the \textit{monotone submodularity} of the objective function, we derive a step-wise greedy strategy that provides a provable approximation guarantee.

\noindent \textbf{Problem Formalization.}
We first define the test generation process as a finite-horizon MDP.

\begin{definition}[Test Generation MDP]
The test suite generation process is defined by the tuple $\mathcal{M} = (\mathcal{S}, \mathcal{A}, \mathcal{T}, \mathcal{R}, \gamma)$, where:

 \textbf{State Space ($\mathcal{S}$):} A state $s_t \in \mathcal{S}$ encapsulates the static program semantics $\mathcal{C}$ and the dynamic \textbf{checked status} of the accumulated suite $S_{t-1}$. Formally, $s_t$ represents the set of logical units (e.g., lines, branches, or latent bugs) verified so far.
 
\textbf{Action Space ($\mathcal{A}$):} 
    An action $a_t \in \mathcal{A}$ represents the generation of a new, valid test case from the universe of all possible valid tests $\mathcal{U}$, which is treated as a single atomic action.
    
 \textbf{Transition ($\mathcal{T}$):} The transition is deterministic. Given state $s_t$ (implying suite $S_{t-1}$) and action $a_t$, the new suite becomes $S_t = S_{t-1} \cup \{a_t\}$, and the state updates to $s_{t+1}$ reflecting the new checked status.

\end{definition}

\noindent To ensure the theoretical validity of our subsequent derivation, we impose the following necessary assumptions on the execution environment and utility function.

\begin{assumption}[Determinism and Independence]
\label{ass:independence}
Let $F: 2^{\mathcal{U}} \to \mathbb{R}_{\geq 0}$ be the utility function (e.g., coverage or mutation score, or any other measure of check completeness). We assume:
(1) \textbf{Determinism:} The execution result and utility of any test case $a \in \mathcal{U}$ are deterministic and time-invariant;
(2) \textbf{Permutation Invariance:} The utility depends solely on the set of unique test cases, regardless of generation order. Formally, for any sequence $\mathbf{a} = [a_1, \dots, a_K]$, $F(\mathbf{a}) \equiv F(\{a_1, \dots, a_K\})$.
\end{assumption}

\noindent Based on these definitions, the goal of the agent is to find a policy $\pi$ generating a sequence of $K$ actions to maximize the final set utility:
\begin{equation}
    \label{eq:global_obj}
    \pi^* = \mathop{\arg\max}_{\pi} F(S_K), \quad \text{where } |S_K| \le K
\end{equation}

\noindent \textbf{Theoretical Analysis and Relaxation.}
Directly solving Eq.~\ref{eq:global_obj} is computationally intractable.
\begin{remark}[Hardness]
The space of valid test cases $\mathcal{U}$ is effectively infinite. Even if we restrict to a large but finite set $\mathcal{U}_{finite} \subset \mathcal{U}$, the problem of selecting $K$ tests to maximize the utility function $F(S)$ is equivalent to the Maximum Coverage Problem~\cite{khuller1999budgeted}, which is known to be \textbf{NP-hard}~\cite{church1974maximal}. 
\end{remark}

\noindent Leveraging the property of monotone submodularity, we can formally relax the intractable global optimization into a greedy strategy with a provable performance bound:

\begin{theorem}[Approximation Guarantee]
\label{thm:guarantee}
Let $\pi_{greedy}$ be a policy that selects the action maximizing the immediate marginal gain at each step $t$:
$
    a_t = \mathop{\arg\max}_{a \in \mathcal{U}} \left( F(S_{t-1} \cup \{a\}) - F(S_{t-1}) \right)
$

Under Assumption~\ref{ass:independence}, the test suite $S_K$ generated by $\pi_{greedy}$ guarantees a utility of at least $(1 - 1/e) \approx 63.2\%$ of the optimal global policy $\pi^*$.
\end{theorem}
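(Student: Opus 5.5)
The plan is to reduce Theorem~\ref{thm:guarantee} to the classical Nemhauser--Wolsey--Fisher bound. That requires two things: showing that $F$ is normalized, monotone and submodular on $2^{\mathcal{U}}$, and then running the standard greedy-versus-optimum telescoping argument.

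\textbf{Structure of $F$.} I will model the utility as a coverage function. Let $\Omega$ be the (finite, for a fixed program $\mathcal{C}$) set of logical units, such as lines, branches or mutants. Assumption~\ref{ass:independence}(1) gives each test $a\in\mathcal{U}$ a well-defined, time-invariant checked set $\phi(a)\subseteq\Omega$. Assumption~\ref{ass:independence}(2) then lets me write
\[
F(S)=w\Bigl(\bigcup_{a\in S}\phi(a)\Bigr)
\]
for a nonnegative additive weight $w$ on $\Omega$ (unit weights give raw coverage counts; normalizing by $|\Omega|$ gives coverage or mutation score). From this form the three properties follow directly:
\begin{itemize}
\item Normalization: $F(\emptyset)=0$.
\item Monotonicity: adding tests can only enlarge the union.
\item Submodularity: for $A\subseteq B$ and $a\notin B$, the marginal gain $F(A\cup\{a\})-F(A)=w(\phi(a)\setminus\bigcup_{A}\phi)$ dominates $w(\phi(a)\setminus\bigcup_{B}\phi)$, because the subtracted union grows.
\end{itemize}
Figure~\ref{fig:sequential_example} is exactly this diminishing-returns phenomenon.

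\textbf{Greedy argument.} Let $S^*$ with $|S^*|\le K$ be an optimal suite, and let $\mathrm{OPT}=F(S^*)$. For the greedy prefix $S_{t-1}$, monotonicity followed by submodularity, applied by adding the elements of $S^*$ one at a time, gives
\[
\mathrm{OPT}\le F(S_{t-1}\cup S^*)\le F(S_{t-1})+\sum_{o\in S^*}\bigl(F(S_{t-1}\cup\{o\})-F(S_{t-1})\bigr).
\]
Each summand is at most the greedy gain $\delta_t=F(S_t)-F(S_{t-1})$, so $\mathrm{OPT}-F(S_{t-1})\le K\delta_t$. Writing $g_t=\mathrm{OPT}-F(S_t)$, this becomes $g_t\le(1-1/K)g_{t-1}$, and therefore
\[
g_K\le(1-1/K)^K\,\mathrm{OPT}\le e^{-1}\,\mathrm{OPT},
\]
which is the claimed $F(S_K)\ge(1-1/e)F(S^*)$.

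\textbf{Details and main obstacle.} Since $\mathcal{U}$ is infinite, the $\arg\max$ needs care. Because $\Omega$ is finite, $F$ takes finitely many values and each marginal gain takes finitely many values, so the maximum is attained and the argument is unaffected. The global policy $\pi^*$ of Eq.~\ref{eq:global_obj} is no stronger than choosing a best set $S^*$ of at most $K$ tests: the transition is deterministic and, by permutation invariance, $F$ depends only on the final set.

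I expect the main obstacle to be the first step. Submodularity is not automatic for every conceivable utility; it fails, for example, if a unit counts as checked only when two tests jointly trigger it. I will therefore state explicitly that Assumption~\ref{ass:independence} is read as per-test determinism and independence, so that each test contributes a fixed checked set and $F$ is a weighted coverage function. If the oracle dimension is to be included, I will define it as per-test killed-mutant sets under the same union structure, so that it fits the same argument.
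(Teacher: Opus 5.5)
Your proposal is correct and follows the same route as the paper: model $F$ as a union-of-checked-sets coverage function, get monotonicity and submodularity from the inclusion $\phi(a)\setminus\bigcup_{B}\phi \subseteq \phi(a)\setminus\bigcup_{A}\phi$, and conclude with the Nemhauser--Wolsey--Fisher $(1-1/e)$ bound. The differences are only in thoroughness: you prove that bound inline via the recursion $g_t\le(1-1/K)g_{t-1}$ rather than citing it, and you state several points the paper leaves implicit, namely that Assumption~\ref{ass:independence} alone does not give submodularity without the per-test checked-set structure, that the optimal policy reduces to an optimal set of at most $K$ tests, and that the $\arg\max$ over the infinite $\mathcal{U}$ is attained because $F$ takes finitely many values.
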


\begin{proof}
The proof proceeds in two steps: first establishing that the objective function exhibits monotone submodularity, and then deriving the approximation bound based on this property.

\noindent \textbf{Step 1: Monotone Submodularity.}
We first show that the utility function $F(S)$ is monotonically non-decreasing and submodular.
Let $E$ be the universe of all structural elements (e.g., lines or branches) to be checked. For any test case $t$, let $Check(t) \subseteq E$ be the set of elements covered by $t$. The utility is defined as $F(S) = |\bigcup_{t \in S} Check(t)|$.
Consider two test suites $A \subseteq B \subseteq \mathcal{U}$ and a new test $x \in \mathcal{U} \setminus B$.
The marginal gain of adding $x$ to a suite $S$ corresponds to the number of elements in $Check(x)$ that remain uncovered in $S$.
Since $A \subseteq B$, the set of covered elements in $A$ is a subset of those in $B$ (i.e., $\bigcup_{t \in A} Check(t) \subseteq \bigcup_{t \in B} Check(t)$). Consequently, any element covered by $x$ that is \textit{new} to $B$ must also be \textit{new} to $A$.
Mathematically, the set difference satisfies:
\begin{equation}
    Check(x) \setminus \bigcup_{t \in B} Check(t) \subseteq Check(x) \setminus \bigcup_{t \in A} Check(t)
\end{equation}
Taking the cardinality of both sides, we obtain the submodularity condition:
\begin{equation}
    F(B \cup \{x\}) - F(B) \leq F(A \cup \{x\}) - F(A)
\end{equation}
Additionally, since adding a test case never removes existing coverage, $F(S)$ is trivially monotone.

\noindent \textbf{Step 2: Derivation of the Bound.}
The sequential generation problem is equivalent to maximizing the set function $F(S)$ under a cardinality constraint $|S| \le K$.
Since we have established in Step 1 that $F(S)$ is a monotone submodular function, we can invoke the classic result by Nemhauser et al.~\cite{nemhauser1978analysis}. Their theorem states that for such functions, the greedy algorithm—which iteratively selects the element with the largest marginal gain—achieves an approximation ratio of $(1 - 1/e)$.
Thus, the step-wise greedy policy $\pi_{greedy}$ provides this theoretical lower bound for the global objective defined in Eq.~\ref{eq:global_obj}.
\end{proof}

\noindent This theoretical guarantee allows us to relax the intractable sequential optimization into a \textbf{step-wise greedy optimization} problem to make this problem solvable. 
When the utility function is monotone submodular, a step-wise greedy strategy by selecting the item with the largest marginal gain in each step is sufficient to approximate the global optimum efficiently. In the following section, we instantiate test suite generation as such an optimization problem, and develop an LLM-based inference framework that performs step-wise greedy generation.

\section{Methodology}
\label{sec:methodology}

In this section, we translate this theoretical step-wise greedy strategy into a practical LLM-based test generator \techname. 
The design of \techname\ consists of two components:
\begin{itemize}[leftmargin=*]
    \item \textbf{An inference procedure} that constructs test suite following a step-wise greedy strategy.
    \item \textbf{A training framework} that equips the base LLM with \emph{sequential} test generation ability.
\end{itemize}

We first instantiate the step-wise greedy objective by formalizing the notions of step, testing state, action, and marginal gain for test suite generation.
Based on this instantiation, we present the inference procedure of \techname.
This procedure implements the step-wise greedy strategy, which is theoretically justified under monotone submodularity and provides an effective approximation to the global optimum.
Finally, we introduce our training framework, which trains the LLM with step-level reinforcement learning so that it can consistently generate non-redundant tests with higher marginal gains under evolving testing states.

Overall, our inference and training are complementary.
The inference procedure follows the step-wise greedy strategy and provides an effective approximation to the global optimum, while the RL training strengthens the base LLM so that it can more reliably execute this strategy in practice under diverse testing states.

\subsection{Instantiation}
We instantiate test generation as a step-wise greedy decision process by mapping it to the MDP
formulation introduced in Section~\ref{sec:problem_solving}.
Concretely, we define:
\begin{itemize}[leftmargin=*]
    \item \textbf{Step.} A step corresponds to generating a single test case. Each test-case generation is treated as an atomic decision.
    \item \textbf{Testing state.} It is the checked status of the program induced by the current test suite.
    The checked status is derived by executing the program with the accumulated tests and reflects which parts of the program have already been exercised and which remain unchecked at this step.
    \item \textbf{Marginal gain.} The marginal gain of an action is the additional checked value it brings when added to the current test suite.
    We define the detailed checked value in Section~\ref{sec: margin}.
\end{itemize}

This instantiation turns test suite generation into a sequential decision-making problem.
At each step, the LLM observes the current testing state, generates a candidate test case, and receives execution feedback, which updates the testing state for the next step.
Given a budget of $K$ tests, the objective is to iteratively generate the test that maximizes marginal gain, thereby approximating the global optimum under monotone submodularity.

Notably, this instantiation preserves the monotonicity and submodular structure assumed in our theoretical analysis.
First, it is \emph{monotone}: adding a valid test case never reduces the checked value of the suite, e.g., coverage and mutation scores do not decrease, and previously checked behaviors remain valid.
Second, it exhibits \emph{submodularity}: it is easier to obtain high-gain tests when the suite is small, but increasingly harder as the suite grows and common program behaviors become covered.
These properties ensure that our instantiation conforms to the assumptions in Section~\ref{sec:problem_solving}, making step-wise greedy generation a valid and effective approximation to the global objective.

\begin{figure}[t]
    \centering
    \includegraphics[width=0.75\textwidth]{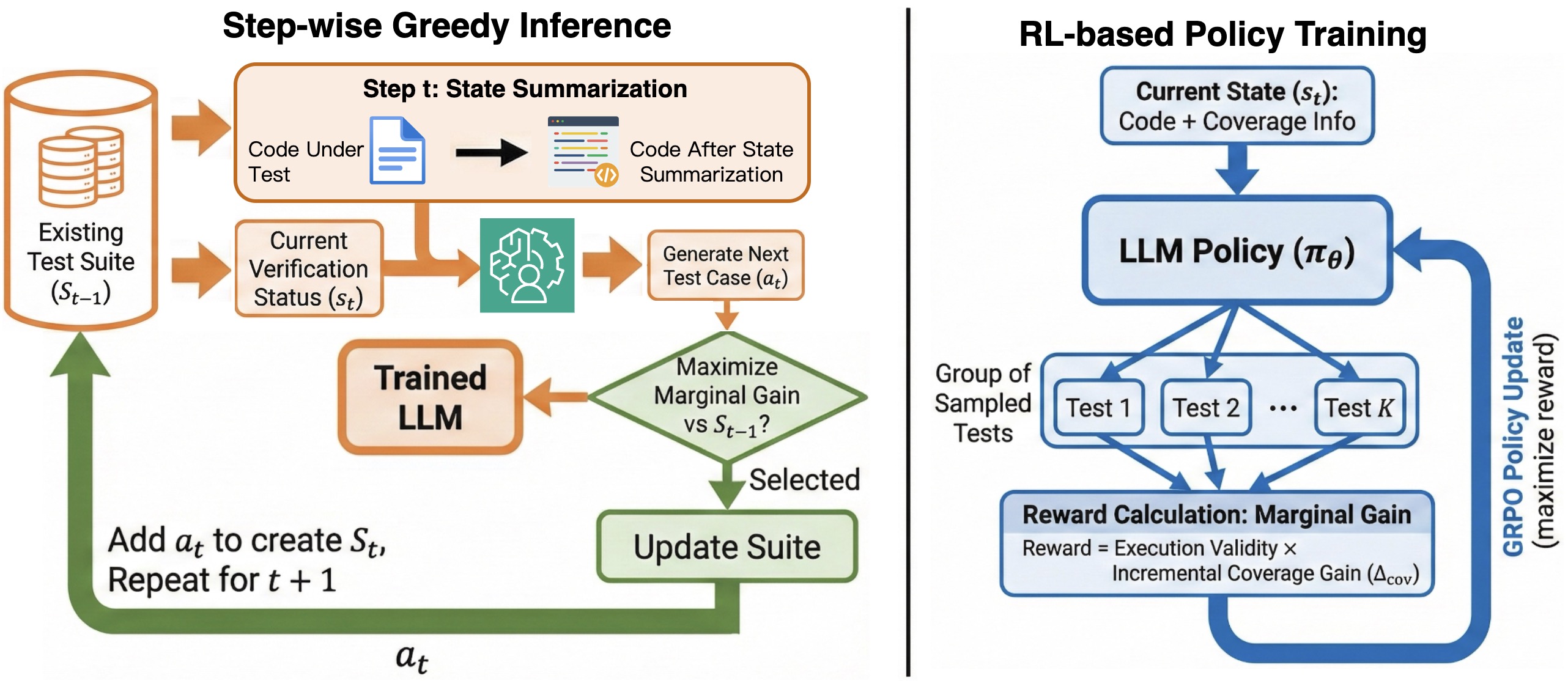}
    \caption{The overview of \techname. The framework operates as an iterative generation loop. 
    The LLM acts as a greedy policy generator, observing the code state augmented with checked markers. The environment executes the generated test, checks its correctness, and updates the state for the next step, ensuring the LLM always targets the remaining checked frontier.
    }
    \label{fig:framework}
    \vspace{-2mm}
\end{figure}

\subsection{Inference Procedure}
\label{subsec:overview}

Given the above instantiation, we design an inference procedure for \techname\ that follows the step-wise greedy objective as illustrated in Fig.~\ref{fig:framework}.
At inference time, \techname\ starts from an empty test suite $S_0=\emptyset$ and iteratively constructs a $K$-test suite under a budget of $K$ actions.

At each step $t$, \techname\ executes the current suite $S_{t-1}$ to obtain execution feedback and summarizes it into the current testing state $s_t$.
Conditioned on $s_t$ and the program under test, the LLM proposes the next action by generating a candidate test case $a_t$ with the goal of maximizing the marginal gain.
\techname\ then adds $a_t$ to the suite, forming $S_t = S_{t-1} \cup \{a_t\}$, and executes the updated suite to obtain the next state $s_{t+1}$.
This process repeats until $K$ test cases are generated.

\subsubsection{State Summarization (Perception).} 
At each step $t$, \techname\ observes the current testing state $s_t$, which summarizes the checked status of the program induced by the current test suite.
To make an informed greedy decision, the LLM must understand what has already been executed and, more importantly, what remains unchecked.

However, raw execution feedback (e.g., coverage reports and logs) is often too verbose and redundant to be directly used in prompting, increasing the cognitive workload of the base model.
We therefore convert $s_t$ into a compact, code-grounded state summary via \emph{in-context coverage projection}: we annotate the code with syntactic markers that highlight the currently uncovered regions (with minimal surrounding context) without changing program semantics and execution status, thereby exposing the checked frontier that the next test should target.

Specifically, we built a standard dynamic instrumentation tool wrapping \texttt{Coverage.py}~\cite{coveragepy_docs}. For any state $s_t$, the environment executes the current suite $S_{t-1}$ to identify the set of uncovered line numbers $U_t$. It then performs Abstract Syntax Tree (AST) injection, parsing the code and injecting the marker \texttt{"\#uncovered"} as a trailing comment to every node in $U_t$. Finally, we convert the modified AST back to source code. This ensures the signal is structurally valid and aligned with the code logic when fed into the LLM.

\subsubsection{Greedy Action} 
With the state summary in place, \techname\ performs greedy action selection by using the LLM as a stochastic policy.
Formally, at step $t$, the LLM induces a policy $\pi_\theta(a \mid s_t)$ over the test-case space, and an action corresponds to generating one complete test case $a_t$.
The intent of this action is greedy: conditioned on the uncovered markers exposed in $s_t$, the model is prompted to focus its generation on maximizing the coverage of currently unchecked regions and on asserting the corresponding behaviors. 
Once $a_t$ is generated, \techname\ executes it together with $S_{t-1}$ to update the suite and the testing state for the next step.

\subsection{Policy Training}
\label{subsec:policy_optimization}

Our empirical study in Section~\ref{subsec:pilot_setup} reveals a fundamental gap between the sequential nature of test suite construction and the poor sequential generation ability of existing open LLMs.
Even when equipped with explicit execution feedback, existing LLMs fail to escape from the diminishing-returns plateau, repeatedly generating structurally redundant tests and exhibiting structural myopia.
This observation indicates that sequential test generation is not a capability that can be reliably elicited through \emph{prompting} alone; it must be \emph{learned}.

\subsubsection{Training Objective.}
Crucially, the goal is not to teach the LLM to write high-quality test cases in isolation.
Instead, we seek to learn a \emph{decision policy} that determines \emph{which} test to generate under a given testing state.
As illustrated in Section~2, the value of a test is not intrinsic but conditional on the existing suite: a test becomes less useful once its behavior is already covered. Following the relaxation in Section~3, the intractable global objective of maximizing the final suite utility can be approximated by a step-wise greedy strategy that maximizes the marginal gain at each step.
This transforms long-horizon planning into a sequence of local decisions and suggests that the policy should be optimized directly against the \emph{immediate marginal gain} of each generated test.

\subsubsection{RL vs. SFT for Sequential Test Generation.}
Under this formulation, the training objective is to maximize the marginal gain of each generated test given the current suite.
This requirement fundamentally mismatches the learning signal provided by commonly-used supervised fine-tuning (SFT)~\cite{ouyang2022training, tufano2020unit}.
SFT trains the model to reproduce a single target test under a given prompt, which provides no supervision on such marginal gain.
When multiple tests are all valid, it offers no signal indicating which one contributes more new coverage or avoids redundancy.
As a result, SFT alone cannot reliably teach an LLM to perform sequential decision-making in test generation.

To provide the missing learning signal, we turn to reinforcement learning.
Instead of asking the model to imitate a fixed test case, at each RL training step, we let the target LLM generate a test case under the current testing state, execute it, and measure its marginal gain with respect to the existing suite.
This marginal gain is formulated as a discrete reward and is used to update the model’s policy:
generating test cases with higher marginal gain are reinforced, while redundant or ineffective ones are penalized.
In this way, the LLM is trained to optimize its policy toward generating test cases that
consistently maximize marginal gain.

\subsubsection{Design RL Reward.}
\label{sec: margin}
The key to applying RL in our setting is to design a reliable reward $r(s_t, a_t)$ that
approximates the marginal gain of each generated test case given
the current suite:
\begin{equation}
    \label{eq:single_step_reward}
    r(s_t, a_t) = F(S_{t-1} \cup \{a_t\}) - F(S_{t-1})
\end{equation}

By grounding the learning signal in this marginal contribution, the reward provides a concise and faithful guide for the LLM, steering it toward generating test cases that maximize marginal gain under the current state.

\noindent \textbf{Definition of marginal gain}.
We define the marginal gain of a test case in terms of both \emph{program coverage} and \emph{fault-detection} capability, encouraging LLMs to generate test cases that can not only exercise more program logic but also effectively expose semantic faults. 

\noindent \textbf{Gain in program coverage}.
For program coverage, we use line coverage as a proxy.
We quantify how many new code lines are exercised by a new test case.
Concretely, given a program under test with $L_A$ lines, let the existing test suite
$S_{t-1}$ cover $L_S$ lines.
After adding a candidate test $a_t$, we execute the updated suite and observe that it now covers
$L_N$ lines in total ($L_N \geq L_S$).
We then define the marginal coverage gain of $a_t$ as
$
\Delta_{\text{cov}}(a_t) = \frac{L_N - L_S}{L_A - L_S}.
$
This normalization maps the marginal coverage gain to $[0,1]$ and measures how much of the remaining uncovered code
is eliminated by a new test case.
As a result, tests that reach previously unexplored code lines receive a higher reward, whereas tests that only re-execute already covered paths yield zero reward.

\noindent \textbf{Gain in test correctness}.
For the ability of fault detection, mutation testing offers a closer approximation, but it is too expensive to perform online during training. According to our experimental statistics, the average time to obtain mutation score is 76.53 seconds, which is an unacceptable time overhead.
We therefore adopt execution correctness of the generated assertion as a lightweight proxy.
This proxy is justified because execution validity is a strict prerequisite for fault detection—a test case cannot reveal semantic logic errors if it fails to execute or crashes prematurely. 
By prioritizing executable assertions, we ensure the model establishes the necessary behavioral oracle required to capture potential faults.
Specifically, we treat execution correctness as a global control signal in the reward.
A test case that fails to compile or execute receives a zero reward.
Only when a test executes successfully do we further evaluate its contribution in terms of coverage. 

\noindent \textbf{Reward definition}.
The final reward $r(s_t, a_t)$ combines execution correctness and marginal coverage gain:
\begin{equation}
\label{eq:rl_reward}
    r(s_t, a_t) = \underbrace{\Delta_{\text{cov}}(a_t)}_{\text{Coverage Gain}} \cdot \underbrace{\mathbb{I}(\text{Valid}(a_t))}_{\text{Assertion Validity}}
\end{equation}

This design treats execution correctness as a hard prerequisite.
Invalid tests receive a zero reward regardless of their nominal coverage, while only
successfully executed tests are further rewarded by how much new code they cover.
As a result, the policy discourages generating unexecutable or fragile tests and is
explicitly guided to prefer generating test cases that are both valid and yield higher marginal coverage gains.

\subsubsection{RL training}
After finalizing the reward design, we adopt Group Relative Policy Optimization (GRPO)~\cite{shao2024deepseekmath} to train the target LLM. 
At each training step, given the same testing state $s_t$ (including the program under test labeled with coverage signals as discussed in Section~\ref{sec:problem_solving}), the policy samples a group of
candidate test cases $\{a_t^{(1)}, \ldots, a_t^{(G)}\}$.
Each candidate is executed to obtain its reward $r(s_t, a_t^{(i)})$ as defined above.
GRPO updates the policy by comparing candidates within the same group.
Specifically, we compute the relative advantage of each test case as: 
\begin{equation}
A_t^{(i)} = r_t^{(i)} - \frac{1}{G} \sum_{j=1}^{G} r_t^{(j)},
\end{equation}
\noindent which measures how much better a candidate performs than the group average under the same state.
The action to generate test cases with positive advantage is reinforced, while those with negative advantage are punished.
The GRPO objective is then defined as
\begin{equation}
\mathcal{L}_{\text{GRPO}}(\theta) =
- \mathbb{E}_{s_t, \{a_t^{(i)}\}}
\left[
\frac{1}{G} \sum_{i=1}^{G}
A_t^{(i)} \cdot \log \pi_\theta(a_t^{(i)} \mid s_t)
\right]
+ \beta \, \mathrm{KL}\!\left(\pi_\theta \,\|\, \pi_{\text{base}}\right),
\end{equation}
where $\pi_{\text{base}}$ denotes the frozen base model and $\beta$ controls the strength of the
KL regularization.
This formulation trains the policy to increase the probability of test cases that achieve higher marginal gains relative to other candidates generated under the same state, while preventing excessive drift from the base model.

\subsection{Training Data Construction}
\label{subsec:data_construction}

To enable effective policy optimization, our RL model requires access to a diverse set of valid and reachable intermediate states to explore from.
Hence, we construct a series of trajectories of valid states and extract step-wise data from them as training data. Unlike SFT, we use these trajectories solely to initialize valid intermediate states for the RL agent to explore from.

\noindent \textbf{1. Candidate Sampling.}
We first collect a pool of candidate test cases, which can originate from either existing oracles or any external generator. Notably, our framework is agnostic to the source of candidate tests. For the ULT benchmark, Huang et al.~\cite{huang2025benchmarking} do not release oracle tests due to data leakage concerns, so we instantiate this step using an LLM-based generator. 
In our training phase, we randomly select 70\% of ULT tasks for trajectory construction.
To balance cost and performance, we employ \textit{GPT-5-mini} to generate test cases for each given function. For each task, we follow the prompting pipeline provided by Huang et al. to sample $N = 20$ independent test cases. Note that GPT-5-mini is only used once to populate this candidate pool. Our RL training never relies on its logits or preferences, and any sufficiently diverse generator could be substituted here.

\noindent \textbf{2. Greedy Ordering.}
To mimic an optimal planning process, we apply the greedy ordering described in Section~\ref{sec:problem_solving}.
We iteratively select the test case that provides the maximum marginal gain relative to the currently selected set. We then perform state replay as follows. We execute the sorted test sequence step-by-step against the code. This allows us to capture the exact coverage at each step. This ensures that every state $s_t$ represents a reachable and real program state, avoiding the impossible states that might arise from random masking or static estimation.
To mitigate the effects of noisy data, we only keep the code and the corresponding test suites if the final test suite achieves $>90\%$ line coverage. 
Finally, we decompose each checked trajectory of length $L$ into $L$ distinct states $\{s_0, \dots, s_{L-1}\}$. After this processing, our training data includes a total of 8,396 pieces of data.

\section{Experimental Setup}
\label{sec:experimental_setup}

\subsection{Research Questions}
We evaluate \techname by answering the following research questions:

\noindent \textbf{RQ1 (Effectiveness):} Does \techname generate better test suites than base models under advanced prompting with feedback-driven refinement, and other learning-based strategies?

\noindent \textbf{RQ2 (Optimization in Sequence):} Does \techname achieve a higher coverage/mutation score with fewer test cases (a steeper growth trajectory) and test the backbone logic first?

\noindent \textbf{RQ3 (Generalization and Bug Detection):} Can the policy learned by \techname generalize to unseen scenarios, and does improved coverage translate into tangible fault-detection capability? We evaluate performance on LiveCodeBench~\cite{jainlivecodebench} to examine robustness on unseen coding tasks.

\noindent \textbf{RQ4 (Ablation Study):} What is the contribution of each component in \techname? 

\subsection{Evaluation Benchmarks}
Considering the well-known impact of data leakage, we employ two distinct benchmarks to evaluate in-distribution performance and out-of-distribution generalization.

\noindent \textbf{ULT Benchmark.} 
As described in Section~\ref{subsec:data_construction}, we randomly split the ULT tasks into 70\% for training data construction and hold out the remaining 30\% solely for evaluation. The evaluation reflects the model's ability to handle unseen code within a similar domain distribution.

\noindent \textbf{LiveCodeBench (LCB).}
To prevent memorization, we utilize LCB, focusing on tasks released after June 2024, which are constructed after the reported training cutoff of all selected base models, serving as a strong proxy for intrinsic reasoning capability. For each task, we treat the canonical reference solution as the program under test when generating tests, and then evaluate bug detection by executing the generated suites against both the canonical and the associated buggy solutions.

\subsection{Baselines}
\label{subsec:baselines}

To validate the effectiveness of \techname, we compare it against the following three groups of baselines: foundation models, advanced prompting strategies, and learning-based approaches.

\noindent \textbf{Group I: Zero-Shot Open-Source LLMs.}
We first evaluate widely used open-source models (i.e., \texttt{Qwen2.5-Coder-7B}~\cite{hui2024qwen2}, \texttt{Llama-3.1-8B}~\cite{dubey2024llama}, and \texttt{CodeLlama-7B}~\cite{roziere2023code}) without any additional training to establish a performance floor. For fairness, we reuse the official generation pipeline provided by the ULT benchmark for all three models, which employs an iterative guided prompting strategy and a fixed test budget $K=5$ utilized in previous work~\cite{wang2025testeval,huang2025benchmarking}.

Our pilot study and the results in previous work~\cite{huang2025benchmarking} show that Qwen2.5-Coder-7B outperforms the other two base models. Therefore, in subsequent experiments involving more sophisticated prompting frameworks, we instantiate them with Qwen2.5-Coder-7B as the underlying backbone to ensure a strong and consistent foundation.

\noindent \textbf{Group II: Advanced Prompting Strategies.}
We further compare against advanced prompting frameworks. These methods represent strong baselines under frozen-weights deployment. We select approaches that are open-source and suitable for Python, rather than being tightly coupled to other languages ~\cite{altmayer2025coverup, yang2024advancing, gu2025llm}, to avoid artifacts introduced by cross-language migration~\cite{baltajicross, ahmed2022multilingual}.

\texttt{ChatTester~\cite{yuan2024evaluating}:} A ChatGPT-style agent approach that improves test generation through a dedicated iterative refinement mechanism. It incorporates a checker to analyze compilation and execution errors, feeding this information back to the LLM to repair the test cases dynamically.
 
\texttt{SymPrompt~\cite{ryan2024code}:} A path-aware prompting strategy that leverages execution analysis to identify uncovered paths. It enriches the prompt with specific path constraints, guiding the LLM to generate inputs that exercise hard-to-reach branches.

\noindent \textbf{Group III: Learning-based Approach.}
We compare \techname against \texttt{TestCTRL}~\cite{zhang2025automated}, which employs proximal policy optimization (PPO)~\cite{schulman2017proximal} with a learned value network (Critic) and utilizes Chain-of-Thought (CoT)~\cite{wei2022chain} traces to guide exploration. We follow the authors' recommended hyperparameters for PPO and CoT generation. To ensure a fair comparison, we train it on the same backbone (Qwen2.5-Coder-7B).

\subsection{Evaluation Metrics}
We employ five complementary metrics, widely used in prior work, to assess the quality and effectiveness of generated test suites. (1) \textbf{Syntactic Correctness Rate:} This metric measures the fraction of generated test cases that are syntactically valid Python code. We implement this check using Python's built-in parser \texttt{ast.parse}. (2) \textbf{Execution Pass Rate:} This metric measures the fraction of generated test \emph{cases} that execute without runtime errors under \texttt{pytest}. (3) \textbf{Line and Branch Coverage:}
For each task, we aggregate the coverage of all tests in the suite and report the average across the benchmark. To analyze RQ2, we additionally compute the coverage trajectory as the suite size $k$ increases from 1 to 5 (Coverage@$k$).
(4) \textbf{Mutation Score:} High structural coverage does not necessarily imply strong fault-detection capability~\cite{inozemtseva2014coverage}. Following existing work~\cite{mundler2024swt, huang2025benchmarking}, we therefore measure the percentage of artificial mutants killed by the generated test suites using Cosmic-Ray~\cite{cosmicray_docs}. 
(5) \textbf{Bug Detection Rate:} Besides mutation score, we further consider semantic bugs that are more representative of real developer mistakes. 
On LCB, we evaluate the real-world bug detection rate by executing each generated suite against both the canonical reference solution and its associated buggy solutions to answer RQ3. A bug is considered detected if there exists at least one test that passes on the canonical solution but fails on the buggy solution.

\subsection{Implementation Details}
\label{subsec:implementation_details}

We train \textbf{\techname} using Qwen2.5-Coder-7B, Llama-3.1-8B, and CodeLlama-7B as base models. 
All models are trained on 8 $\times$ NVIDIA A100 80GB GPUs utilizing the Verl framework~\cite{sheng2024hybridflow}.
We follow the instructions of Verl and perform a small grid search to identify stable hyperparameters. Specifically, the final configuration is detailed as follows.
We set the group sampling size $G=8$.  We use the AdamW optimizer with a learning rate of $1\mathrm{e}{-6}$ and a cosine decay schedule. To prevent the policy from deviating too far from the base model's linguistic distribution, we set the KL divergence coefficient to $\beta=0.001$. 
To prevent infinite loops or malicious code execution during online training, execution is confined within a sandboxed container with strict resource limits (3s timeout).

\section{Experimental Results}
\label{sec:results}

In this section, we present the experimental results to answer the RQs formulated in Section~\ref{sec:experimental_setup}. 

\subsection{RQ1: Effectiveness of \techname}
\label{subsec:rq1_results}

We first evaluate the overall effectiveness of \techname.  To ensure robustness, all experiments are repeated three times with different seeds, and we report the averaged results. Table~\ref{tab:main_results} summarizes the performance on the ULT benchmark. We also conduct a Wilcoxon signed-rank test~\cite{wilcoxon1945individual} to validate statistical significance ($p < 0.01$) for all improvements reported below.
 
\begin{table*}[t]
\centering
\caption{Main Results on ULT Benchmark. \techname significantly outperforms all open-source baselines, including advanced prompting strategies (ChatTester, SymPrompt) and learning-based methods (TestCTRL).}

\label{tab:main_results}
\scriptsize
\begin{tabular}{l|ccc|cc}
\toprule
\textbf{Approach} & \textbf{Line Cov. (\%)} & \textbf{Branch Cov. (\%)} & \textbf{Mutation Score (\%)} & \textbf{Syntactic (\%)} & \textbf{Execution (\%)} \\
\midrule
CodeLlama-7B & 38.17 & 39.05 & 27.72 & 87.31 & 8.95 \\
Llama-3.1-8B & 40.05 & 41.05 & 27.04 & 90.51 & 13.48 \\
Qwen2.5-Coder-7B & 51.42 & 43.23 & 33.80 & 98.66 & 16.86 \\
\midrule
SymPrompt & 34.50 & 28.01 & 16.42 & 99.59 & 25.02 \\
ChatTester &  44.15 & 34.64  & 22.79  & 99.18  & 6.14\\
\midrule
TestCTRL & 43.14 & 55.95  &  35.27 & 74.86  & 12.02 \\
\midrule
\textbf{\techname-CodeLlama} & \textbf{63.48} & \textbf{57.94} & \textbf{47.02} & \textbf{99.87} & \textbf{58.97} \\
\textbf{\techname-Llama} & \textbf{62.72} & \textbf{56.71} & \textbf{45.96} & \textbf{100.00} & \textbf{58.26} \\
\textbf{\techname-Qwen} & \textbf{69.69} & \textbf{65.87} & \textbf{57.13} & \textbf{99.95} & \textbf{67.14} \\
\bottomrule
\end{tabular}

\end{table*}

\techname consistently and substantially outperforms the corresponding base models across all metrics.
Taking the strongest backbone Qwen2.5-Coder-7B as a primary example, \techname lifts line coverage from 51.42\% to 69.69\% and branch coverage from 43.23\% to 65.87\%. This corresponds to a remarkable relative improvement of 35.53\% and 52.37\%, respectively.
The execution pass rate nearly quadruples, soaring from 16.86\% to 67.14\% (a \textbf{298.22\% relative improvement}), while syntactic correctness is maintained at a near-perfect 99.95\%.
This indicates that our framework effectively suppresses the hallucination of unexecutable code, a common pitfall in base models.

This trend generalizes robustly to other backbones.
For CodeLlama-7B and Llama-3.1-8B, \techname achieves a 66.31\% and 56.60\% relative gain in line coverage, respectively.
Notably, both models see an exponential improvement in execution correctness (from $\sim$9-13\% to $\sim$58\%), demonstrating that \techname successfully aligns even weaker models with the strict constraints of valid test generation.
We further examine whether higher coverage translates to better bug-finding capability.
\techname achieves the highest mutation scores among all open-source methods (e.g., \techname-Qwen reaches 57.13\%). This confirms that the greedy generation expert we trained is genuinely exploring deeper logic paths where subtle bugs reside.

Group~II results reveal the limitations of iterable prompt engineering alone.
ChatTester and SymPrompt fail to surpass \techname, even underperforming the base Qwen2.5 backbone in coverage (e.g., SymPrompt achieves only 34.50\% line coverage).
This aligns with our pilot study (Section~\ref{subsec:pilot_setup}): static prompting often forces models to hallucinate invalid logic to satisfy constraints, leading to a collapse in execution rate (e.g., ChatTester drops to 6.14\%). 
In contrast, \techname optimizes the generation policy itself, enabling it to navigate complex constraints intrinsically rather than superficially.
Under the same Qwen2.5-Coder backbone, \techname surpasses TestCTRL by significant margins: approximately +61.54\% relative improvement in line coverage and +17.73\% in branch coverage.
This result empirically validates our theoretical relaxation in Section~\ref{sec:problem_solving}: directly optimizing the step-wise greedy objective is more effective than training a separate value network without execution in the sequential generation task.

\subsection{RQ2: Optimization in Sequence}
\label{subsec:rq2_results}

\begin{figure}[t]
    \centering
    
    \includegraphics[width=0.75\textwidth]{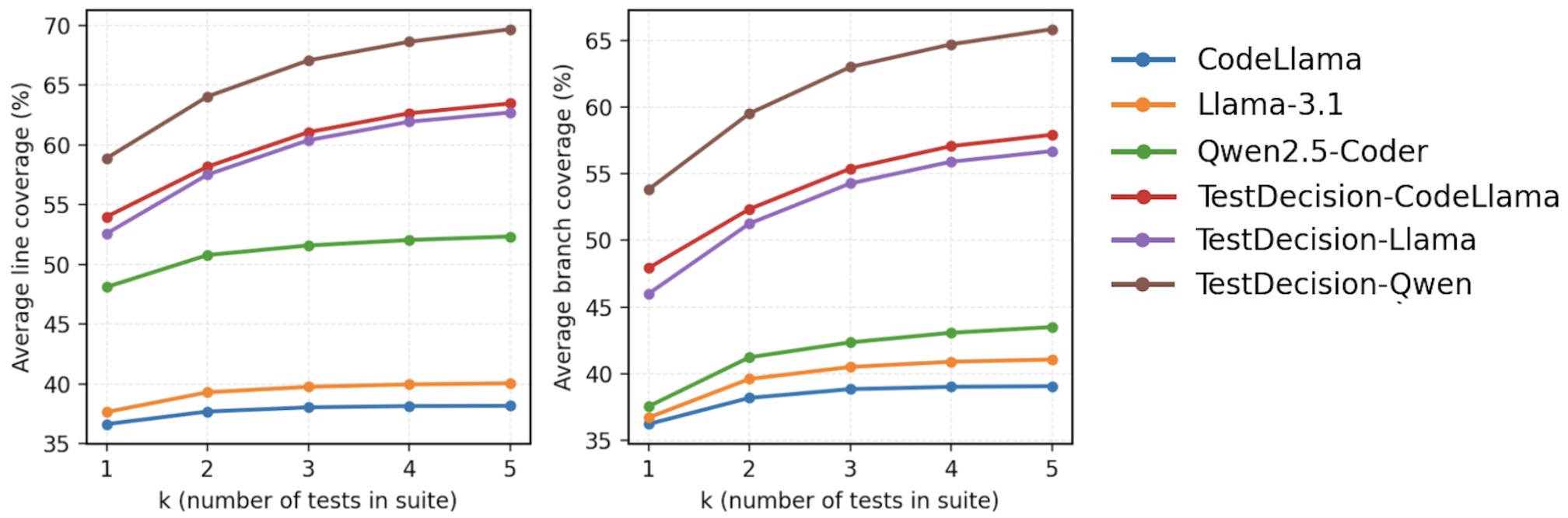} 

    \caption{Step-wise performance trajectory. \techname exhibits a steeper growth curve compared to baselines. 
    }
    \label{fig:RQ_2_coverage_at_k}
     \vspace{-2mm}
\end{figure}

In this RQ, we further investigate whether \techname can mitigate the structural myopia of base models.
We analyze the step-wise generation trajectory on ULT. Figure~\ref{fig:RQ_2_coverage_at_k} visualizes the growth of line and branch coverage as the test suite size $k$ increases from 1 to 5.

As evident in Figure~\ref{fig:RQ_2_coverage_at_k}, \techname-Qwen (Red Line) exhibits the steepest growth trajectory among all methods.
Most notably, at the very first step ($k=1$), our model already achieves a line coverage of 58.27\%, which is significantly higher than the base Qwen2.5-Coder (47.31\%).
This indicates that \techname has learned to prioritize: it does not randomly generate a test; instead, it identifies the backbone logic of the given focal method and generates the most high-yield test case first. This capability effectively reduces the inference budget required to reach a target coverage level. \techname surpasses the final performance of the base model ($k=5$) using only 1 test case.

A critical limitation of existing LLMs identified in Sec.~\ref{subsec:pilot_setup} is the rapid saturation of marginal gain. We observe that baselines flatten out after $k=3$, indicating a failure to find the remaining corner cases.
In contrast, \techname maintains a positive slope even at $k=4$ and $k=5$.
Quantitatively, from $k=3$ to $k=5$, the base Qwen2.5-Coder model only gains 0.3\% coverage, implying redundancy. In the same interval, \techname gains 2.6\% (nearly $9\times$ higher marginal gain).
This confirms that the policy has learned to leverage the checked frontier markers to target the remaining unchecked logic purposefully, rather than hallucinating redundant tests for already-covered paths.

\subsection{RQ3: Generalization and Bug Detection}
\label{subsec:rq3_results}

To answer RQ3, we assess whether the policy learned by \techname\ can generalize to unseen scenarios and effectively detect bugs. We evaluate performance on LCB~\cite{jainlivecodebench}, serving as a rigorous test for out-of-distribution (OOD) generalization, ensuring no data leakage.

\begin{table*}[t]
\centering
\caption{Generalization Results on LiveCodeBench. 
}

\label{tab:lcb_results}
\scriptsize
\resizebox{0.97\textwidth}{!}{%
\begin{tabular}{l|ccc|cc|c}
\toprule
\textbf{Approach} & \textbf{Line Cov. (\%)} & \textbf{Branch Cov. (\%)} & \textbf{Mutation Score (\%)} & \textbf{Syntactic (\%)} & \textbf{Execution (\%)} & \textbf{Bug Det. (\%)}\\
\midrule
CodeLlama-7B & 12.24 & 31.44 & 16.29 & 64.20 & 5.78 & 8.50\\
Llama-3.1-8B & 27.61 & 44.00 & 22.17 & 86.94 & 23.90 & 11.00\\
Qwen2.5-Coder-7B & 49.61 & 50.52 & 38.34 & 93.88 & 36.36 & 20.50\\
\midrule
SymPrompt (Qwen) & 19.09 & 13.33 & 11.28 & 97.39 & 19.69 & 4.50\\
ChatTester (Qwen) & 36.38 & 27.99 & 20.51 & 99.79 & 8.68 & 9.50\\
\midrule
TestCTRL (Qwen) & 44.12 & 53.85  & 36.49  &  78.63 & 8.72 & 24.00\\
\midrule
\textbf{\techname-CodeLlama} & \textbf{42.78} & \textbf{38.03} & \textbf{27.72} & \textbf{99.88} & \textbf{34.80} & \textbf{16.00}\\
\textbf{\techname-Llama} & \textbf{43.62} & \textbf{52.90} & \textbf{34.26} & \textbf{100.00} & \textbf{39.80} & \textbf{21.50}\\
\textbf{\techname-Qwen} & \textbf{62.47} & \textbf{63.51} & \textbf{54.35} & \textbf{100.00} & \textbf{57.73} & \textbf{32.50}\\
\bottomrule
\end{tabular}
}
\end{table*}

As shown in Table~\ref{tab:lcb_results}, \techname\ exhibits strong generalization capabilities, consistently outperforming foundation models across all metrics.
For the Qwen2.5-Coder-7B backbone, \techname\ achieves a 25.92\% relative improvement in line coverage and 25.71\% relative improvement in branch coverage.
This confirms that the sequential generation ability learned via \techname\ captures the intrinsic logic of test suite generation rather than merely overfitting to the training distribution of ULT.
The gains are even more pronounced for weaker backbones.
For CodeLlama-7B and Llama-3.1-8B, we observe a 249.51\% and 57.99\% relative improvement in line coverage.

Additionally, LCB tasks are algorithmically complex, often causing standard models to generate syntactically correct but functionally broken code (e.g., runtime errors).
CodeLlama has an execution pass rate of only 5.78\%, and even Qwen2.5 achieves only 36.36\%.
\techname-Qwen achieves an execution rate of 57.73\% (a 58.77\% relative improvement), and \techname-CodeLlama improves by 502.08\% (5.78\% $\to$ 34.80\%).
Moreover, our models achieve near-perfect syntactic correctness ($\ge 99.88\%$), eliminating the compilation failures that plague baselines like CodeLlama.
On this benchmark, prompting strategies also perform worse than the base model.
SymPrompt achieves only 19.09\% line coverage, and ChatTester reaches 36.38\%.
This highlights a critical limitation of prompt engineering: complex prompting workflows are often brittle and struggle to generalize to complex algorithms without domain adaptation.
In contrast, \techname\ modifies the underlying policy weights, enabling stable and robust generalization that prompting cannot achieve.

\vspace{0.5em}
\noindent \textbf{Bug Detection Capability.}
As shown in the last column of Table~\ref{tab:lcb_results}, \techname\ consistently demonstrates superior bug detection capabilities compared to all baselines.
\techname-Qwen achieves a bug detection rate of \textbf{32.50\%}, marking a substantial 58.5\% relative improvement over the base Qwen2.5-Coder-7B (20.50\%). Similarly, for the weaker CodeLlama-7B backbone, our framework nearly doubles the detection rate from 8.50\% to 16.00\%. This confirms that the coverage gains translate directly into tangible fault-detection power.
\techname-Qwen also outperforms the PPO-based TestCTRL (24.00\%) by a wide margin (+8.5 points). This suggests that our step-wise greedy objective, which emphasizes immediate execution correctness and marginal gain, is more effective at producing tests that catch subtle bugs than global reward optimization.
In summary, \techname\ can generate smarter tests that are effective at exposing realistic software defects.

\subsection{RQ4: Ablation Study}
\label{subsec:ablation}

To disentangle the contributions of our training strategy and inference framework, we conduct a comprehensive ablation study on the Qwen2.5-Coder-7B backbone. We evaluate three variants: (1) \textbf{w/o RL}, which uses SFT on the same data; (2) \textbf{w/o Greedy Inference}, which removes the greedy strategy prompts and state augmentation during inference for the fully trained model; and (3) \textbf{vanilla LLM}, which applies our inference framework to the frozen base model.

\begin{table*}[t]
\centering
\caption{Ablation Study Results on Qwen2.5-Coder-7B.}

\label{tab:ablation}
\scriptsize
\begin{tabular}{l|ccc|cc}
\toprule
\textbf{Variant} & \textbf{Line Cov. (\%)} & \textbf{Branch Cov. (\%)} & \textbf{Mutation Score (\%)} & \textbf{Syntactic (\%)} & \textbf{Execution (\%)} \\
\midrule
\rowcolor{gray!10} \textbf{Full \techname} & \textbf{69.69} & \textbf{65.87} & \textbf{57.13} & \textbf{99.95} & \textbf{67.14} \\
\midrule
w/o RL & 54.14 & 45.42 & 37.26 & 99.85 & 23.98 \\
w/o Greedy Inference & 63.27 & 57.19 & 51.23 & 99.28 & 62.59 \\
vanilla LLM & 59.29 & 51.72 & 41.17 & 99.97 & 25.05 \\
\bottomrule
\end{tabular}%

\end{table*}

Table~\ref{tab:ablation} reveals a strong synergy between our training and inference components. The full \techname\ achieves the best performance across all metrics.
Comparing the full model with the SFT variant reveals the limitations of supervised learning in this domain.
Although SFT utilizes the exact same high-quality training data as our RL training, it achieves only 54.14\% line coverage and, more critically, a low execution pass rate of 23.98\%.
In contrast, \techname\ achieves 69.69\% line coverage and yields a 180\% relative improvement in execution correctness over SFT.
This stark difference indicates that simply cloning expert traces is insufficient for learning the underlying reasoning logic of test generation. 
When removing the inference framework, the line coverage drops from 69.69\% to 63.27\% and mutation score from 57.13\% to 51.23\%.
This confirms that the greedy inference guide is necessary for this sequential decision-making problem. Even a well-trained policy performs better when the checked frontier is explicitly highlighted in the prompt, reducing the cognitive load of identifying checked gaps.

A significant finding is that applying our inference framework can boost the base model to 59.29\% line coverage, confirming our theoretical analysis described in Sec.~\ref{sec:problem_solving}. This suggests that the structural myopia of base models can be partially mitigated by our inference framework alone. Explicitly marking the checked frontier guides LLMs to explore more effectively, even for a frozen model.
However, without RL training, vanilla LLM still suffers from a low execution rate (25.05\%). It sees the gaps but lacks the coding robustness to check them correctly.

In summary, the full \techname\ combines the best of both: the framework provides the guidance (greedy generation and state augmentation), and the RL training provides the better generation capability and robustness (Execution Correctness), resulting in a complementary performance leap.

\section{Discussion}
In this section, we analyze the performance of \techname\ across different model scales, illustrate the performance of proprietary frontier models and CODAMOSA~\cite{lemieux2023codamosa}. Notably, this comparison between \techname and these is not a fair comparison, because proprietary models have much more parameters and the SBST approach generates a large number of candidate tests, from which iterative evolution and selection are performed.

\begin{table*}[t]
\centering

\caption{Comprehensive Comparison with Proprietary Models, Different Scales, and Traditional Approaches. }

\label{tab:discussion_results}
\scriptsize

\begin{tabular}{l|ccc|cc}
\toprule
\textbf{Approach} & \textbf{Line Cov. (\%)} & \textbf{Branch Cov. (\%)} & \textbf{Mutation Score (\%)} & \textbf{Syntactic (\%)} & \textbf{Execution (\%)} \\
\midrule
GPT-4o & 59.88 & 51.71 & 46.41 & 100.00 & 27.16 \\
GPT-5.2 & 69.40 & 65.65 & 66.02 & 99.97 & 70.58 \\
\midrule
CODAMOSA & 52.73 & 48.60 & 23.56 & 77.88 & 15.26 \\
\midrule
Qwen2.5-3B-Base & 46.58 & 41.69 & 29.35 & 94.27 & 12.75 \\
\textbf{\techname-Qwen-3B} & \textbf{52.37} & \textbf{48.94} & \textbf{37.84} & \textbf{100.00} & \textbf{23.90} \\
\midrule
Qwen2.5-7B-Base & 51.42 & 43.23 & 33.80 & 98.66 & 16.85 \\
\textbf{\techname-Qwen-7B} & \textbf{69.69} & \textbf{65.87} & \textbf{57.13} & \textbf{99.95} & \textbf{67.14} \\
\midrule
Qwen2.5-14B-Base & 55.07 & 46.75 & 37.21 & 99.49 & 18.41 \\
\textbf{\techname-Qwen-14B} & \textbf{74.01} & \textbf{70.14} & \textbf{61.58} & \textbf{99.92} & \textbf{68.82} \\
\bottomrule
\end{tabular}%

\end{table*}

\subsection{Comparison with Proprietary Models}
A key question for the software engineering community is whether open-source models can rival proprietary giants in specialized tasks.
As shown in Table~\ref{tab:discussion_results}, the base Qwen2.5-7B model lags significantly behind GPT-4o (59.88\%) and GPT-5.2 (69.40\%).
However, \textbf{\techname-Qwen2.5-7B} completely reverses this gap, achieving 69.69\% line coverage, which surpasses GPT-4o by 16.38\% and achieves comparable performance with GPT-5.2.
It suggests that for test suite generation, a smaller model (7B) aligned with a rigorous theoretical objective and a suitable training phase can outperform general-purpose frontier models that rely on generic instruction following.

\subsection{Scalability and Scaling Laws}
We investigate whether \techname\ scales effectively with model size. Table~\ref{tab:discussion_results} compares the performance gains across 3B, 7B, and 14B parameter scales.
\techname-Qwen2.5-3B improves line coverage from 46.58\% to 52.37\%. While an improvement, the absolute gain (+5.79\%) is modest compared to larger models. This suggests that very small models may lack the fundamental reasoning capacity to fully exploit the checked frontier markers.
The gain explodes at the 7B level, with line coverage jumping from 51.42\% to 69.69\%. This indicates a phase transition where the model becomes capable enough to strictly follow the greedy policy. When the model scales, the trend continues, with the 14B model achieving the highest performance of 74.01\%.
The results demonstrate that \techname\ is highly scalable. Our framework extracts \textit{larger} gains as the base model becomes more capable, suggesting promising potential for future larger-scale open models.

\subsection{Comparison with Search-Based Software Testing (SBST)}
Finally, we present the performance of CODAMOSA~\cite{lemieux2023codamosa}, a representative method that combines SBST with LLM-based regeneration. 
Since the original CODAMOSA implementation is tailored to Codex~\cite{chen2021evaluating}, we modify its configuration to use the base Qwen2.5-Coder-7B model in our experiments.
CODAMOSA achieves 52.73\% line coverage, which slightly outperforms the base model. 
\techname-Qwen2.5-7B significantly outperforms CODAMOSA by 32.16\%.
More critically, CODAMOSA suffers from a low execution pass rate (15.26\%) and syntactic correctness (77.88\%). Furthermore, even for the test cases that are runnable, the readability is very unacceptable.
In contrast, \techname\ maintains near-perfect syntax (>99.9\%) and high pass rates. This highlights \techname's advantage: it retains the valid code generation capability and readability of LLMs while integrating the systematic coverage optimization, effectively getting the best of both worlds.

\subsection{Threats to Validity}
\label{subsec:threats}
We categorize the potential threats to the validity of our study into three categories and discuss the measures we take to mitigate these threats.

\textbf{Threats to internal validity } mainly lie in the following aspects.
To mitigate the implementation threat, we build upon mature, industry-standard open-source libraries. Furthermore, we conduct rigorous testing and manual checks on the implementation. 
To maximally avoid data leakage issue, we use ULT and LCB in our evaluation.
To ensure fair comparisons, we use the official implementations and recommended settings provided in the original papers for all baselines.

\textbf{Threats to external validity} mainly lie in the following aspects.
When considering generalizability across models and scales, we extend our experiments to cover Llama-3.1, CodeLlama, and Qwen series, as well as models with different parameter sizes ranging from 3B to 14B. 
Currently, our implementation and evaluation focus on Python. This is primarily due to the unleaked benchmark requirement and Python's mature dynamic instrumentation ecosystem. However, our core methodology is theoretically language-agnostic. Future work can transfer this framework to statically typed languages by simply replacing the underlying coverage collection tools.
Although ULT contains real-world open-source functions, it still has limitations compared to system-level testing of industrial-scale software. Complex class dependencies and the need for external environment simulation might pose challenges to the RL environment. Nevertheless, ULT has been demonstrated to be significantly more representative of real-world complexity than traditional toy datasets like HumanEval, covering complex control flows and logic dependencies.

\textbf{Threats to construct validity} mainly lie in the metric reliability.
The existence of equivalent mutants might lead to an underestimation of mutation scores. This is a common and persistent challenge in software testing research. To mitigate this issue, we use a standard subset of mutation operators and adopt relative improvement for comparison. Since all methods are evaluated on the same set of mutants, any bias resulting from equivalent mutants is systematic across all baselines.

\section{Related Work}
Automated unit test generation has long been dominated by Search-Based Software Testing (SBST) techniques~\cite{fraser2011evosuite, fraser2014large, pacheco2007randoop, lukasczyk2022pynguin}. Tools like EvoSuite~\cite{fraser2011evosuite} and Pynguin~\cite{lukasczyk2022pynguin} employ genetic algorithms to evolve test suites. While effective for structural coverage, these methods often produce unreadable code and struggle with complex semantic constraints or hard-to-cover branches guarded by deep logic~\cite{lemieux2023codamosa}. The emergence of LLMs has shifted the paradigm towards generating more human-like and semantically meaningful tests. Early works, such as AthenaTest~\cite{tufano2020unit}, demonstrated the feasibility of fine-tuning models on large corpora of method-test pairs. 
Codamosa~\cite{lemieux2023codamosa} uses LLMs to generate new test inputs to help SBST to escape coverage plateaus.
However, benchmarking studies such as ULT~\cite{huang2025benchmarking} and TestEval~\cite{wang2025testeval} have highlighted significant challenges, including data contamination and the struggle of LLMs to handle structurally complex real-world functions~\cite{mundler2024swt}.

To further address the generation limitations (such as hallucinations and compilation errors)~\cite{yang2024evaluation, schafer2023empirical}, recent research focuses on iterative refinement workflows that incorporate external feedback. ChatTester~\cite{yuan2024evaluating} introduces a multi-turn dialogue framework where ChatGPT refines tests based on static analysis and execution feedback. Similarly, CoverUp~\cite{altmayer2025coverup} and TELPA~\cite{yang2024advancing} employ coverage reports and counterexamples to iteratively guide the LLM toward uncovered code regions. More advanced methods like Panta~\cite{gu2025llm} and SymPrompt~\cite{ryan2024code}, utilize fine-grained program analysis to guide exploration. TestART~\cite{gu2024testart} focuses on automated repair of generated tests to resolve compilation failures. While effective, these inference-time techniques primarily rely on prompt engineering to guide frozen models~\cite{chen2024chatunitest, wang2024hits}, often succumbing to structural myopia.

RL has become a cornerstone for aligning LLMs with rigorous software engineering objectives that are non-differentiable, such as execution correctness and runtime efficiency. In test generation, RLSQM~\cite{steenhoek2025reinforcement}, TestCTRL~\cite{zhang2025automated} employs PPO to conduct RL training, though it relies on a computationally heavy Actor-Critic architecture. 
PyTester~\cite{takerngsaksiri2025pytester} leverages PPO to generate test cases from text descriptions in test-driven development. 
Distinct from these approaches, our work demonstrates that an inference grounded in the submodularity of the coverage objective, and a synergistic training framework based on GRPO, is sufficient to achieve SOTA alignment. 

Beyond testing tasks, recent works like CodeRL~\cite{le2022coderl}, CodeRL+~\cite{jiang2025coderl+}, CUBE~\cite{wang2025co}, RLTF~\cite{liurltf}, and VeriRL~\cite{teng2025verirl} utilize RLVR to bridge the gap between pre-training objectives and execution semantics in code generation. PRLCoder~\cite{ye2025process} further advances this by introducing process-supervised RL to guide the step-by-step reasoning of code synthesis. In the broader scope, RL has also been applied to program fuzzing, where CovRL~\cite{eom2024fuzzing} learn to mutate inputs to maximize feedback signals.

Monotone submodularity has been successfully applied to a wide range of AI and ML problems, such as influence maximization in social networks~\cite{kempe2003maximizing, kempe2005influential}, sensor placement and information-gathering in graphical models~\cite{krause2005near}, data subset selection and core-set construction~\cite{mirzasoleiman2016fast}, and document summarization~\cite{li2012multi} with diversity and coverage constraints.

\section{Conclusion and Future Work}
In this paper, we tackle the coverage plateau in automated test generation by fundamentally reframing the task as a sequential decision-making process.
We demonstrate that the global objective of test suite generation exhibits monotone submodularity. This theoretical insight allows us to relax the intractable sequential generation problem into a feasible step-wise greedy strategy, providing a rigorous mathematical foundation for iterative generation.
Guided by this theory, we present \textbf{\techname}, which bridges the gap between theoretical optimality and practical capability through a complementary inference framework and training pipeline.
Our extensive evaluation on the ULT and LiveCodeBench benchmarks establishes \techname\ as the new state-of-the-art considering base models of the same sizes.
Our 7B-parameter model not only achieves comparable performance with larger proprietary models but also demonstrates robust out-of-distribution generalization. Additionally, as our base model gets larger, \techname outperforms the giant proprietary model at the 14B scale. This highlights that if we increase the size even further, the margin can potentially be larger.
We believe \techname\ sets a new precedent for applying RL to Software Engineering. It indicates that theoretically grounded RL alignment is the key to unlocking the reasoning potential of LLMs for complex, constraint-heavy engineering tasks.

\bibliographystyle{ACM-Reference-Format}
\bibliography{sample-base}

\end{document}